\newtheorem{theorem}{Theorem}[section]
\newtheorem{remark}[theorem]{Remark}
\begin{document}
\begin{center}
{\LARGE \textbf{Some notes on biasedness and unbiasedness of
two-sample Kolmogorov-Smirnov test}}\\
P. Bubeliny\\
\textsc{e-mail: bubeliny@karlin.mff.cuni.cz}\\
Charles University, Faculty  of  Mathematics  and
Physics, KPMS,  Sokolovska 83, Prague, Czech Republic, 18675.\\
\end{center}

\textbf{Abstract:}\textit{ This paper deals with two-sample
Kolmogorov-Smirnov test and its biasedness. This test is not
unbiased in general in case of different sample sizes. We found out
most biased distribution for some values of significance level
$\alpha$. Moreover we discovered that there exists number of
observation and significance level $\alpha$ such that this test is
unbiased at level
$\alpha$.}\\

\section{Introduction}
\indent In the world of statistic, there exists an enormous number
of tests and new ones are going to be derived. For most of these
tests we know, that they are consistent, we know their asymptotic
behavior and a lot of another properties. But there is one thing
which is often omitted. This thing is unbiasedness.\\
\indent Somebody can think, that all of the tests, which are used,
are unbiased or are biased against very special alternative which
can not occur in practical applications. Somebody can look at
unbiasedness as at very poor power of tests against some
alternatives and somebody can just thing that unbiasedness is
unimportant. But they are all wrong. We often check some assumptions
of test by other tests. But what if the checking test is biased and
therefore it leads to the bad decision? Then the main test should
not be used and it can lead to
wrong decision. Therefore, unbiasedness should not be underestimate.\\
\indent There are a lot of tests which are really unbiased. But
there are plenty of tests that are used daily and they are biased.
One of such tests is well known two-sample Kolmogorov-Smirnov test.
In what follows, we look at biasedness and unbiasedness of this test
in some cases in detail.

\section{Biasedness and unbiasedness of Kolmogorov-Smirnov test}

\indent Firstly, we should recall, what unbiasedness is. A test is
said to be unbiased at level $\alpha$ if
\begin{enumerate}
\item it has significance level $\alpha$
\item for all distributions from alternative the power of this test
is greater or equal to $\alpha$.
\end{enumerate}
The test is said to be unbiased if it is unbiased at all level
$\alpha\in(0,1)$. Finally, the test is said to be biased if it is
not unbiased. Specially, the test is biased at level $\alpha$
against alternative $G$ if it is an level $\alpha$ test and $P(\textrm{reject} H|G)<\alpha$.\\
\indent Consider, that $x_1,\dots,x_n$ and $y_1,\dots,y_m$ are two
independent samples having distributions with continuous
distribution functions $F$ and $G$, respectively. We would like to
test the hypothesis $H:$ $F=G$ against the alternative $A:$ $F\neq
G$. Then two-sample Kolmogorov-Smirnov test is based on statistic
\[
D_{n,m}=\sup_x|\hat{F}_n(x)-\hat{G}_m(x)|,
\]
where $\hat{F}_n(x)$ and $\hat{G}_m(x)$ are empirical distribution
functions of $F$ and $G$. The hypothesis $H$ is rejected for large
value of $D_{n,m}$. The exact formula for computing
$p$-values can be found in \cite{csen}.\\
\indent Firstly, we should realize that statistic $D_{n,m}$ of
two-sample Kolmogorov-Smirnov test has discrete distribution.
Therefore $p$-values for this test are discrete as well. For example
consider that $n=m=50$. Then the test statistic $D_{n,m}$ can take
just 50 different values $1/n,2/n,\dots,1$. For statistic
$D_{n,m}=0.26$ the $p$-value is equal to $0.0678$ and for the next
value $D_{n,m}=0.28$ the $p$-value is equal to $0.0392$. Testing at
level $\alpha=0.05$ could be little bit confusing because the power
of this test is equal for each value $\alpha\in [0.0392,0.0678)$.
There exists distribution $G$ such that power of Kolmogorov-Smirnov
test at level $\alpha=0.05$ is equal to $0.045$. Such distribution
does not meet requirements of definition of unbiasedness for
$\alpha=0.05$ though the power of this test is higher than exact
level of this test equal to $0.0392$. To hold the idea of
unbiasedness for tests with discrete test statistic we should
consider just discrete values of significance level $\alpha$ or use
randomized versions of these tests.\\
\indent It should be kept in mind that Kolmogorov-Smirnov test does
not depend on monotonic transformation of samples. If we transform
both samples (by the same monotonic transformation) to samples with
distribution functions $F'$ and $G'$, respectively then
$\sup_x|\hat{F}_n(x)-\hat{G}_m(x)|=\sup_x|\hat{F}_n'(x)-\hat{G}_m'(x)|$.
Therefore without loss of generality, we assume that $F$ is
distribution function of uniform distribution given by
\begin{equation}\label{unif}
F(x) =  \left\{
\begin{array}{ll}
0 & \textrm{if $x<0$}\\
x & \textrm{if $0\leq x \leq 1$}\\
1 & \textrm{if $x > 1$}\\
\end{array}. \right.
\end{equation}

\indent In \cite{cGK}, they proved that for $n=m$ there exist
$\alpha\in(0,1)$ such that two-sample Kolmogorov-Smirnov test is
unbiased at level $\alpha$ against two-sided alternative $F\neq G$.
If we consider just one-sided alternatives $A_1: F\leq G$ or $A_2:
F\geq G$ we can extend this founding to $n\neq m$.
\begin{theorem}
Let $x_1,\dots,x_n$ and $y_1,\dots,y_m$ be independent samples from
distribution $F$ and $G$. Then for arbitrary $n,m\in N$, there
exists $\alpha\in(0,1)$ such that two-sample Kolmogorov-Smirnov test
of hypothesis $H:\,F=G$ against one-sided alternative $A_1: F\leq G$
or $A_2: F\geq G$ is unbiased at level $\alpha$.
\end{theorem}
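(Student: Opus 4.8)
\medskip
\noindent\textit{Sketch of the intended proof.} The plan is to prove the theorem by exhibiting a single explicit value of $\alpha$ for which the one-sided test is unbiased --- namely the smallest attainable significance level --- and by working with the one-sided Kolmogorov-Smirnov statistic adapted to the alternative. I treat the alternative $A_1:F\le G$ in detail; the case $A_2:F\ge G$ is completely symmetric and follows by interchanging the two samples. For $A_1$ the appropriate statistic is $D_{n,m}^{-}=\sup_x\bigl(\hat G_m(x)-\hat F_n(x)\bigr)$, with $H$ rejected for large values, and I take the particular test that rejects $H$ exactly when $D_{n,m}^{-}$ attains its maximal possible value $1$. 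Put $\alpha:=\frac{m!\,n!}{(m+n)!}$.

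First I would describe the rejection event combinatorially. Since $F$ and $G$ are continuous, ties occur with probability zero, and $D_{n,m}^{-}=1$ holds precisely when some point $x$ has $\hat G_m(x)=1$ and $\hat F_n(x)=0$, that is, when $\max_j y_j<\min_i x_i$: every $y_j$ lies strictly to the left of every $x_i$. Under $H$ the pooled sample consists of $n+m$ i.i.d.\ continuous observations, all $(n+m)!/(n!\,m!)$ interleavings of the two groups are equally likely, and exactly one of them puts every $y_j$ before every $x_i$; hence $P_H(D_{n,m}^{-}=1)=\frac{m!\,n!}{(m+n)!}=\alpha$. So the test has exact size $\alpha$, giving condition (1) of unbiasedness; and since $\alpha$ is itself an attainable $p$-value, the discreteness caveat raised earlier does not interfere.

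The core of the matter is condition (2): for every continuous $G$ with $F\le G$ pointwise one must show $P_{F,G}(D_{n,m}^{-}=1)\ge\alpha$. I would establish this by a monotone coupling. The inequality $F\le G$ is equivalent to the reverse inequality $F^{-1}\ge G^{-1}$ between the generalized quantile functions, so if $U_1,\dots,U_n,V_1,\dots,V_m$ are i.i.d.\ uniform on $(0,1)$ then $x_i:=F^{-1}(U_i)$, $y_j:=G^{-1}(V_j)$ have exactly the distributions prescribed by the alternative, while $\tilde x_i:=G^{-1}(U_i)$, $y_j:=G^{-1}(V_j)$ realize $H$, and on this common space $\min_i x_i\ge\min_i\tilde x_i$. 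Hence $\{\max_j y_j<\min_i\tilde x_i\}\subseteq\{\max_j y_j<\min_i x_i\}$, so $P_{F,G}(D_{n,m}^{-}=1)\ge P_H(D_{n,m}^{-}=1)=\alpha$. (Equivalently, $P_{F,G}(D_{n,m}^{-}=1)=\int_0^1\bigl(1-F(G^{-1}(u))\bigr)^n\,m u^{m-1}\,du$, and since $F(G^{-1}(u))\le u$ the integrand is at least $(1-u)^n m u^{m-1}$, whose integral over $(0,1)$ equals $\alpha$.) Together with the size computation this gives unbiasedness at level $\alpha$, and the same reasoning with the two samples interchanged covers $A_2$.

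The substantive step is the power lower bound in condition (2), and the point that needs care is that it must hold for every continuous alternative $G$, including ones without a density; the quantile-coupling argument handles this uniformly, using only the standard facts that $F^{-1}(U)\sim F$ for uniform $U$ and that $F\le G$ implies $F^{-1}\ge G^{-1}$ (the integral version hides the same issue in the substitution $u=G(x)$, which is legitimate because $G$ is continuous). Once that monotonicity is in place the remainder is bookkeeping. It is worth noting that $\alpha=\frac{m!\,n!}{(m+n)!}$ is the smallest nonzero attainable significance level of this one-sided test, which is why it is the natural value to try.
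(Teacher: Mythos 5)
Your proposal is correct and follows essentially the same route as the paper: both take $\alpha$ to be the smallest attainable level, identify the rejection event as complete separation of the two samples ($\max_j y_j<\min_i x_i$), and show the power dominates the size by a pointwise monotonicity argument driven by $F\le G$ (your parenthetical integral $\int_0^1\bigl(1-F(G^{-1}(u))\bigr)^n m u^{m-1}\,du$ is the paper's integral $n\int_0^1(1-x)^{n-1}G^m(x)\,dx$ after the substitution $u=G(x)$ and the reduction to uniform $F$). Your quantile-coupling phrasing and the explicit care about alternatives $G$ without densities are minor refinements of, not departures from, the paper's argument.
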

\begin{proof}
Without loss of generality, we consider that the first sample
$x_1,\dots,x_n$ is from uniform distribution.\\
Firstly, we consider only the alternative $A_1: F\leq G$. For this
alternative, the Kolmogorov-Smirnov statistic is given by
$D^*_{n,m}=\sup_{x \in (0,1)}\big(\hat{F}_n(x)-\hat{G}_m(x)\big)$,
where $\hat{F}_n$ and $\hat{G}_m$ are empirical distribution
functions of $F$ and $G$. The hypothesis $H$ is rejected for small
values of $D^*_{n,m}$. Consider $\alpha$ such small, that we reject
hypotheses $H$ for $D_{n,m}$ equals to minus one. It occurs if and
only if the samples $x_1,\dots,x_n$ and $y_1,\dots,y_m$ satisfy
\begin{equation}\label{maxmin}
\max(y_1,\dots,y_m)<\min(x_1,\dots,x_n).
\end{equation}
The probability of this event is given by
\begin{equation}\label{int1}
n\int_0^1(1-x)^{n-1} G^m(x) dx.
\end{equation}
Moreover, $G(x)$ is monotone and $G(x)\geq x$ because we consider
alternative $A_1: F\leq G$. Therefore the function $(1-x)^{n-1}
G^m(x)$ of integral (\ref{int1}) attains its minimum for $G(x)=x$.
This integral represents probability of rejection of hypothesis at
level $\alpha$ if alternative $G$ is true and it is minimized for
$F=x=G(x)$. Hence, Kolmogorov-Smirnov test is unbiased at level
$\alpha$.\\
\indent The proof for alternative $A_2: F\geq G$ is similar. We take
$\alpha$ such small, that we reject hypothesis if and only if
$D_{n,m}=1$. The inequality (\ref{maxmin}) change to
\[
\max(x_1,\dots,x_n)<\min(y_1,\dots,y_m)
\]
and probability of this event is then given by
\begin{equation}\label{int2}
n\int_0^1x^{n-1} (1-G(x))^m dx
\end{equation}
For alternative $A_2$, we have $G(x)\leq x$ and hence integral
(\ref{int2}) is minimized for $G(x)=x$. It proves the theorem.
\end{proof}
\indent The result of this theorem does not mean that two-sample
Kolmogorov-Smirnov test is unbiased against one-sided alternative.
It only says that there exist small level $\alpha$ for which this
test is unbiased. In the following theorem we show that for $n\neq
m$ two-sided Kolmogorov-Smirnov test is not unbiased against
two-sided alternative.
\begin{theorem}
Let $x_1,\dots,x_n$ be i.i.d from uniform distribution with
distribution function $F$ and $y_1,\dots,y_m$ be i.i.d. from
distribution having distribution function $G$. If $n\neq m$ then
there exists $\alpha\in(0,1)$ such that two-sample
Kolmogorov-Smirnov test of hypothesis $H:F=G$ is biased against
alternative with the distribution function
\begin{equation}\label{dfa2}
G(x)=\frac{(\frac{x}{1-x})^{\frac{n-1}{m-1}}}{1+(\frac{x}{1-x})^{\frac{n-1}{m-1}}}.
\end{equation}
\end{theorem}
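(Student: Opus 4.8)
The plan is to push the idea of the preceding theorem to its extreme. I would choose $\alpha$ so small that the two-sided test rejects $H$ precisely when the statistic attains its largest possible value $D_{n,m}=1$; concretely, let $\alpha$ be the probability of that event under $H$. Since $D_{n,m}$ has a discrete distribution and $1$ is its maximal value, this yields a test of exact level $\alpha$, and a short Beta-integral computation gives $\alpha=2\,n!\,m!/(n+m)!\in(0,1)$ (using $n,m\ge 2$; if $n=1$ or $m=1$ the right-hand side of (\ref{dfa2}) is not a genuine continuous distribution function, so we tacitly assume $n,m\ge 2$ throughout). The event $\{D_{n,m}=1\}$ is exactly the event that the two samples are completely separated, i.e.\ $\max(y_j)<\min(x_i)$ or $\max(x_i)<\min(y_j)$, and by the computations (\ref{int1}) and (\ref{int2}) of the preceding proof its probability under an alternative with distribution function $G$ equals
\[
\pi(G)=n\int_0^1(1-x)^{n-1}G(x)^m\,dx+n\int_0^1 x^{n-1}\bigl(1-G(x)\bigr)^m\,dx=n\int_0^1 h_x\bigl(G(x)\bigr)\,dx ,
\]
where $h_x(g):=(1-x)^{n-1}g^m+x^{n-1}(1-g)^m$. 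Thus the power against $G$ is $\pi(G)$, while the level is $\pi(F)$, with $F(x)=x$.

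The crucial point is that $\pi$ is a \emph{separable} functional of $G$: it is an integral of $h_x\bigl(G(x)\bigr)$ with no coupling between distinct values of $x$, so it is minimised over all functions $G\colon[0,1]\to[0,1]$ simply by minimising each $h_x$ pointwise. For $m\ge 2$ each $h_x$ is strictly convex on $[0,1]$ with $h_x'(0)<0<h_x'(1)$, hence attains its minimum at the unique interior solution of $h_x'(g)=0$, that is of $(1-x)^{n-1}g^{m-1}=x^{n-1}(1-g)^{m-1}$, equivalently $\bigl(g/(1-g)\bigr)^{m-1}=\bigl(x/(1-x)\bigr)^{n-1}$. Solving for $g$ reproduces exactly the function in (\ref{dfa2}); one checks directly that this function is continuous, strictly increasing, and maps $(0,1)$ onto $(0,1)$, so it is a legitimate continuous distribution function. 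Hence this particular $G$ is the pointwise, and therefore global, minimiser of the ``total separation'' probability $\pi$.

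It then remains only to obtain the strict inequality. Strict convexity of $h_x$ with minimiser $G(x)$ gives $h_x\bigl(G(x)\bigr)\le h_x(x)$, with equality iff $G(x)=x$; but $G(x)=x$ at a point $x\in(0,1)\setminus\{1/2\}$ would force $(n-1)/(m-1)=1$, which is impossible when $n\ne m$. So $h_x\bigl(G(x)\bigr)<h_x(x)$ for all $x\in(0,1)$ except $x=1/2$, and integrating,
\[
\pi(G)=n\int_0^1 h_x\bigl(G(x)\bigr)\,dx< n\int_0^1 h_x(x)\,dx=\pi(F)=\alpha .
\]
Therefore the exact-level-$\alpha$ two-sample Kolmogorov--Smirnov test has power strictly below $\alpha$ against the alternative (\ref{dfa2}), i.e.\ it is biased at level $\alpha$ against it, as claimed.

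The only genuinely non-routine ingredient is recognising the separable, convex structure of the separation probability $\pi$; once that is in hand, matching the pointwise minimiser of $h_x$ with (\ref{dfa2}) is a one-line algebra exercise, and the strict inequality is immediate from strict convexity together with $n\ne m$. Everything else — verifying that (\ref{dfa2}) is a bona fide distribution function, that $\{D_{n,m}=1\}$ is exactly the separation event, that the chosen $\alpha$ lies in $(0,1)$, and evaluating the Beta integrals giving $\pi(F)=\alpha$ — is elementary.
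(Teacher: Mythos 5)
Your proof is correct and follows essentially the same route as the paper: take $\alpha$ so small that rejection occurs only on the complete-separation event $\{D_{n,m}=1\}$, express its probability as a separable integral in $G$, and minimise the integrand pointwise in $y=G(x)$ to arrive at (\ref{dfa2}). You additionally supply details the paper leaves implicit --- the convexity check showing the critical point is a genuine interior minimum, the strict inequality $\pi(G)<\alpha$ when $n\neq m$, and the caveat that $n,m\geq 2$ is needed for (\ref{dfa2}) to be a bona fide continuous distribution function --- but the underlying argument is the same.
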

\begin{proof}
Consider $\alpha$ such small, that we reject hypotheses if and only
if $D_{n,m}=\sup_x|\hat{F}_n(x)-\hat{G}_m(x)|$ is equal to one. That
is, the samples $x_1,\dots,x_n$ and $y_1,\dots,y_m$ have to satisfy
\begin{equation}\label{emm}
\max(y_1,\dots,y_m)<\min(x_1,\dots,x_n)\,\, \textrm{or}\,\,
\max(x_1,\dots,x_n)<\min(y_1,\dots,y_m).
\end{equation}
The probability of this event is given by
\[
n\int_0^1\big((1-x)^{n-1} G^m(x) +x^{n-1} (1-G(x))^m\big)\, dx.
\]
Substitute $G(x)$ by $y$ and let the derivative of function
$(1-x)^{n-1} y^m +x^{n-1} (1-y)^m$ according to $y$ equal to zero.
It leads to the equation
\[
\big(\frac{y}{1-y}\big)^{m-1}=\big(\frac{x}{1-y}\big)^{n-1}.
\]
Therefore the probability of event (\ref{emm}) is not minimized for
$F(x)=G(x)=x$ but for
\[
G(x)=\frac{(\frac{x}{1-x})^{\frac{n-1}{m-1}}}{1+(\frac{x}{1-x})^{\frac{n-1}{m-1}}}.
\]
\end{proof}
\begin{figure}[h]
  \begin{center}
    \includegraphics[width=14cm]{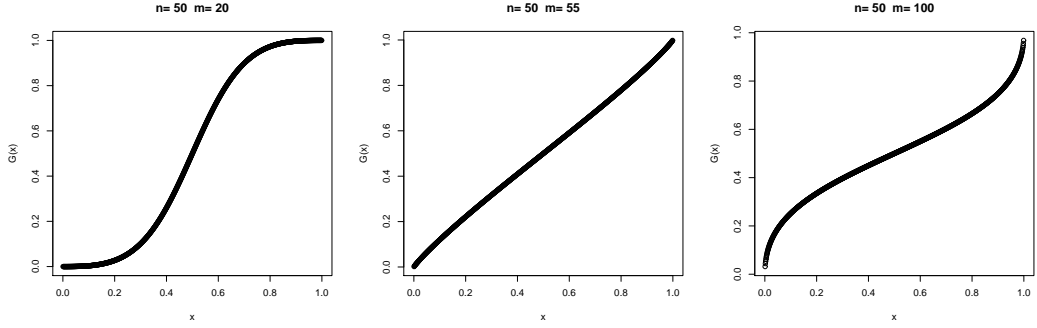}
    \caption{Plot of distribution function $G$ given by (\ref{dfa2}) for $n=50$ and $m=20,55,100$}
     \label{fdfa2}
  \end{center}
\end{figure}
Some examples of distribution function given by (\ref{dfa2}) are in
figure \ref{fdfa2}. Although we found out that two-sample
Kolmogorov-Smirnov test is biased against alternative (\ref{dfa2})
it is really true for very small $\alpha$. Let denote this smallest
level $\alpha$ by $\alpha_1$. Then $\alpha_1$ can be directly
computed by
\begin{equation}\label{ealpha1}
\alpha_1=n\int_0^1(1-x)^{n-1} x^m +x^{n-1} (1-x)^m
\,dx=2nm\frac{\Gamma(n)\Gamma(m)}{\Gamma(n+m+1)}.
\end{equation}
For example if $n=10$ and $m=11$ then $\alpha_1$ is equal to $5.67$x$10^{-6}$.\\
\indent All previous result are considered for Kolmogorov-Smirnov
statistic $D_{n,m}=1$. Let consider second highest value of this
statistic. For $n>m$ it is equal to $1-1/n$ and for $n<m$ it is
equal to $1-1/m$, respectively. We denote by $\alpha_2$ the
significance level $\alpha$ such that we reject two-sample
Kolmogorov-Smirnov test if and only if
$D_{n,m}\geq \max(1-1/n,1-1/m)$.\\

\indent Firstly, assume that $n>m\geq2$ and consider that
$D_{n,m}=1-1/n$. It can occur if and only if these samples are such
that $x_{(1)}<\dots<x_{(n-1)}<y_{(1)}<x_{(n)}$ or
$x_{(1)}<y_{(m)}<x_{(2)},\dots<x_{(n)}$. Together with the case
$D_{n,m}=1$ ($x_{(n)}<y_{(1)}$ or $y_{(m)}<x_{(1)}$) we have that
$D_{n,m}$ is greater or equal to $1-1/n$ if and only if
$x_{(n-1)}<y_{(1)}$ or $y_{(m)}<x_{(2)}$. It leads as to the
probability of rejecting the hypotheses at level $\alpha_2$
\begin{eqnarray}\label{ednmb}
P(D_{n,m}\geq 1-1/n)&=&P(\forall_j\, y_j>x_{(n-1)})+P(\forall_j\,
y_j<x_{(2)})\nonumber\\
&=& n(n-1)\int_0^1 \big(x^{n-2}(1-x)(1-G(x))^m+x(1-x)^{n-2}
G^m(x)\big)\,dx.
\end{eqnarray}
As in proof of  previous theorem let $G(x)=y$ and let the derivative
of interior function of integral (\ref{ednmb}) according to $y$
equal to zero. It leads us to solve the equation
\[
\big(\frac{y}{1-y}\big)^{m-1}=\big(\frac{x}{1-x}\big)^{n-3}.
\]
The solution $y$ as a function of $x$ is given by
\begin{equation}\label{esoldnmb}
y=G(x)=\frac{(\frac{x}{1-x})^{\frac{n-3}{m-1}}}{1+(\frac{x}{1-x})^{\frac{n-3}{m-1}}}.
\end{equation}
\newline
\indent Now assume that $2\leq n<m$ and consider $D_{n,m}=1-1/m$. It
can be true if and only if $y_{(1)}<\dots<y_{(m-1)}<x_{(1)}<y_{(m)}$
or $y_{(1)}<x_{(n)}<y_{(2)},\dots<y_{(m)}$. Therefore the
probability of event $D_{n,m}\geq 1-1/m$ is equal to
\begin{eqnarray}\label{ednma}
P(D_{n,m}\geq 1-1/m)&=&P(D_{n,m}=1-1/m)+P(D_{n,m}=1)\nonumber\\
&=&nm\int_0^1\big((1-x)^{n-1}G^{m-1}(x)(1-G(x))+x^{n-1}(1-G(x))^{m-1}G(x)\big)\,
dx\nonumber\\
&&+n\int_0^1\big((1-x)^{n-1} G^m(x) +x^{n-1} (1-G(x))^m\big)\, dx.
\end{eqnarray}
As before let $G(x)=y$ and let the derivative of interior function
of integral (\ref{ednma}) according to $y$ equal to zero. It leads
us to the equation
\[
\big(\frac{y}{1-y}\big)^{m-3}=\big(\frac{x}{1-x}\big)^{n-1}.
\]
Therefore the distribution function of most biased distribution in
this case is given by
\begin{equation}\label{esoldnma}
y=G(x)=\frac{(\frac{x}{1-x})^{\frac{n-1}{m-3}}}{1+(\frac{x}{1-x})^{\frac{n-1}{m-3}}}.
\end{equation}
\begin{remark}
If $n=3$ and $m=2$ or $n=2$ and $m=3$ then the most biased
distribution is discrete distribution given by probabilities
$P(y=0)=P(y=1)=\frac{1}{2}$ or $P(y=\frac{1}{2})=1$, respectively.
\end{remark}
Consider $G(x)=x$ then level $\alpha_2$ is given (according to
(\ref{ednmb}) and (\ref{ednma})) by
\begin{equation}\label{ealpha2}
\alpha_2=2nmk\frac{\Gamma(n)\Gamma(m)}{\Gamma(n+m+1)}=k\alpha_1,
\end{equation}
where $k=\min(n+1,m+1)$. Distribution functions (\ref{esoldnmb}) and
(\ref{esoldnma}) are similar to $S$-curves on figure \ref{fdfa2}.
Although these distribution functions are not equal to themselves
and to (\ref{dfa2}) as well, some interesting results can be found.
If $|n-m|=2$ then (\ref{esoldnmb}) and (\ref{esoldnma}) change to
$G(x)=x$. It means that the distribution which minimize
(\ref{ednmb}) and (\ref{ednma}) is uniform distribution. It leads us
to the following theorem.
\begin{theorem}\label{thnm2}
Let $\alpha_{n,m}$ be given by ($\ref{ealpha2}$). If $n=m+2$ or
$n=m-2$ then two-sample Kolmogorov-Smirnov test is unbiased at level
$\alpha_{n,m}$. Moreover, if $n\neq m$ and $|n-m|\neq 2$ then
Kolmogorov-Smirnov test is biased at level $\alpha_{n,m}$.
\end{theorem}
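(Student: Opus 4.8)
The plan is to read off both halves of the theorem from the pointwise minimizations already performed for the two rejection-probability integrals (\ref{ednmb}) and (\ref{ednma}). Observe first that $\alpha_{n,m}$ is, by construction, the exact size of the test that rejects $H$ precisely when $D_{n,m}\geq\max(1-1/n,1-1/m)$, so the first clause in the definition of unbiasedness holds automatically, and everything reduces to comparing, for an alternative $G$, its power --- equal to (\ref{ednmb}) when $n>m$ and to (\ref{ednma}) when $n<m$ --- with $\alpha_{n,m}$, which is the common value of those integrals at $G(x)=x$.

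Suppose $n=m+2$ and, as throughout, that $F$ is uniform. For each fixed $x\in(0,1)$ the integrand of (\ref{ednmb}), viewed as a function of $y=G(x)$, has the form $A(1-y)^m+By^m$ with $A=x^{n-2}(1-x)>0$ and $B=x(1-x)^{n-2}>0$; for $m\geq2$ it is strictly convex on $[0,1]$, so it has a unique minimizer, namely the solution of $\big(\frac{y}{1-y}\big)^{m-1}=\big(\frac{x}{1-x}\big)^{n-3}$, which collapses to $y=x$ exactly when $n=m+2$. Hence the integrand is minimized pointwise at $y=x$, so the integral is minimized over all distribution functions $G$ by the choice $G(x)=x$; since that choice is itself a distribution function, the power against every alternative is at least $\alpha_{n,m}$, which is the unbiasedness inequality. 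The case $n=m-2$ runs the same way from (\ref{ednma}), whose first-order condition $\big(\frac{y}{1-y}\big)^{m-3}=\big(\frac{x}{1-x}\big)^{n-1}$ again forces $y=x$; here one must instead check that this interior critical point is a minimum, by verifying that the integrand of (\ref{ednma}) strictly decreases as $y$ moves away from each of the endpoints $0$ and $1$ and has a unique critical point in $(0,1)$.

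For the biasedness part take $n\neq m$, $|n-m|\neq2$, and $F$ uniform. By the same first-order analysis the integrand of (\ref{ednmb}) (if $n>m$) or of (\ref{ednma}) (if $n<m$) is minimized pointwise at $y^{*}(x)$ given by (\ref{esoldnmb}), respectively (\ref{esoldnma}). Because $|n-m|\neq2$ the exponent in that formula differs from $1$, so the analytic function $y^{*}$ is not the identity and thus $y^{*}(x)\neq x$ on a set of positive Lebesgue measure; together with strict convexity of the integrand of (\ref{ednmb}) --- or with the fact that the unique interior critical point of the integrand of (\ref{ednma}) is a strict global minimum --- this makes the value of the integral at $G=y^{*}$ strictly smaller than its value at the uniform distribution, that is, strictly smaller than $\alpha_{n,m}$. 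Since $y^{*}$ is a genuine continuous distribution function whenever the exponent $\frac{n-3}{m-1}$ or $\frac{n-1}{m-3}$ is finite and positive, this $G=y^{*}$ belongs to the alternative while its power lies below the size, so the test is biased at level $\alpha_{n,m}$. The only $n,m\geq2$ for which that exponent degenerates are $\{n,m\}=\{2,3\}$, and those cases are settled by the Remark, whose discrete laws attain the pointwise infimum of the relevant integrand and hence likewise have power strictly below $\alpha_{n,m}$.

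I expect the genuine obstacle to be confirming that the critical point supplied by the first-order condition really is the minimum of the integrand over $y\in[0,1]$, and not a maximum or an endpoint value. For (\ref{ednmb}) this is immediate from strict convexity in $y$, but the integrand of (\ref{ednma}) is not convex on all of $[0,1]$, so one must argue directly: check that it strictly decreases as $y$ leaves $0$ and as $y$ leaves $1$, and that the equation $\big(\frac{y}{1-y}\big)^{m-3}=\big(\frac{x}{1-x}\big)^{n-1}$ has a unique root in $(0,1)$, whence that single interior critical point is forced to be the global minimum. The remaining work is routine bookkeeping for small $n,m$ (and the degenerate $m=1$), where $y^{*}$ fails to be a distribution function and where the expression in (\ref{ealpha2}) need not even lie in $(0,1)$; the statement is to be understood for $n,m\geq2$, with $\alpha_{n,m}$ then an admissible significance level.
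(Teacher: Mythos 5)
Your proposal follows the paper's own argument: both read the theorem off the first-order conditions behind (\ref{esoldnmb}) and (\ref{esoldnma}), observing that the minimizing $G$ collapses to the identity exactly when $|n-m|=2$ and is a different (hence strictly worse) alternative otherwise. You additionally verify that the stationary points are genuine minima, that the minimizer is a bona fide distribution function, and that the biasedness inequality is strict --- details the paper's two-sentence proof leaves implicit --- but the approach is the same.
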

\begin{proof}
Because of $\alpha_{n,m}$=$\alpha_2$, the most biased distribution
functions are given by (\ref{esoldnmb}) and (\ref{esoldnma}). For
$|n-m|=2$ they change to $G(x)=x=F(x)$. It means that the uniform
distribution minimize the probability of rejection hypotheses $F=G$
against alternative $F\neq G$ at level $\alpha_2$ if and only if
$|n-m|=2$.
\end{proof}
\begin{remark}\label{rknm1}
If $|n-m|=1$ then Kolmogorov-Smirnov test is not biased against the
distribution functions (\ref{esoldnmb}) and (\ref{esoldnma}) at
level $\alpha_1$.
\end{remark}
Let denote by $\mathscr{A}_\alpha$ the set of distributions for
which Kolmogorov-Smirnov test is biased at level $\alpha$, it is
\[
\mathscr{A_\alpha}=\{G: P(\textrm{reject $H$ at level }\alpha
|\textrm{alternative }G\textrm{ is true})<\alpha\}.
\]
For different levels $0<\alpha<\alpha^*$, one would expect that
there is some subset relation between $\mathscr{A_\alpha}$ and
$\mathscr{A_{\alpha^*}}$. But it is not generally true. According to
the theorem \ref{thnm2} there exist $G_\alpha$ such that
$G_\alpha\in\mathscr{A_\alpha}$ and
$G_\alpha\notin\mathscr{A_{\alpha^*}}$. On the other hand, from
remark \ref{rknm1} we have that there exists $G_\alpha^*$ such that
$G_\alpha^*\notin\mathscr{A_\alpha}$ and
$G_\alpha^*\in\mathscr{A_{\alpha^*}}$. Therefore, in general
$\mathscr{A_\alpha}$ is not subset of $\mathscr{A_{\alpha^*}}$ and
vice versa.\\

\indent The previous result can be quite simply generalized to
$\alpha_3$ (the third smallest $\alpha$) in case of $n>2m$ or
$2n<m$. Adding the probability of the even $D_{n,m}=1-2/m$ or
$D_{n,m}=1-2/n$ to the (\ref{ednmb}) or (\ref{ednma}) leads us to
the most biased distributions at level $\alpha_3$ given by
\begin{equation}\label{ednm3}
G_3(x) =
\frac{(\frac{x}{1-x})^{\frac{n-5}{m-1}}}{1+(\frac{x}{1-x})^{\frac{n-5}{m-1}}} \hspace{1cm} \textrm{if $n>2m$}\\
\end{equation}
or
\begin{equation}\label{ednm3}
G_3(x) = \frac{(\frac{x}{1-x})^{\frac{n-1}{m-5}}}{1+(\frac{x}{1-x})^{\frac{n-1}{m-5}}} \hspace{1cm} \textrm{if $m>2n$}.\\
\end{equation}
In this case, $\alpha_3$ is given by
\[
\alpha_3=2k_2nm\frac{\Gamma(n)\Gamma(m)}{\Gamma(n+m+1)}=k_2\alpha_1,
\]
where $k_2=\frac{\min((m+2)(m+1),(n+2)(n+1))}{2}$. If $n=m+4$ or
$m=n+4$ then $G_3(x)=x$. Together with condition $n>2m$ or $m>2n$ we
have that for $n=6,m=2$ or $n=2,m=6$ the two-sample
Kolmogorov-Smirnov test is unbiased at level $\alpha_3=3/7$ and for
$n=7,m=3$ or $n=3,m=7$ the two-sample Kolmogorov-Smirnov test is
unbiased at
level $\alpha_3=1/6$.\\

\indent Sofar considered $\alpha$'s are too small in case we have
some tens of observation in each sample.  Therefore we perform the
following simulation to look if two-sample Kolmogorov-Smirnov test
is biased against the distribution ($\ref{dfa2}$) at level
$\alpha\approx0.05$. We set the number of observation $n$ for the
first sample be $n=10,20,50,100$ and the number of observation $m$
for the second sample be $m=11,15,21,51,101$. As a distribution of
the first sample we consider uniform distribution and for second
sample we consider two distributions. The first one is the uniform
distribution and the second one is distribution having distribution
function $G$ given by (\ref{dfa2}). We perform 10000 repetitions and
compute the difference between the estimate of power if second
sample is from alternative distribution and the estimated level
$\alpha$ if the second sample is from uniform distribution. The
results of this simulation are in table \ref{tabPOW}. We can see
that for all considered $n$ and $m$ the estimate of difference is
greater than $0$. It means that two-sample Kolmogorov-Smirnov test
is not biased against
alternative (\ref{dfa2}) at level $\alpha=0.05$.\\
\begin{table}[ht]
\begin{center}
\caption{Difference between estimate of power for alternative $G$
given by (\ref{dfa2}) and estimate of level $\alpha$ of two-sample
Kolmogorov-Smirnov test.}\label{tabPOW}
\begin{tabular}{|c|rrrrr|}
  \hline
$\alpha=5\%$&m=11&m=15&m=21&m=51&m=101\\
  \hline
$n=10$& 0.0034 & 0.0144 & 0.0320 & 0.4153 & 0.7290 \\
$n=20$& 0.0291 & 0.0087 & 0.0016 & 0.2784 & 0.9170 \\
$n=50$& 0.4071 & 0.3403 & 0.2715 & 0.0001 & 0.5291 \\
$n=100$& 0.9070 & 0.9189 & 0.9190 & 0.4557 & 0.0001 \\
 \hline
\end{tabular}
\end{center}
\end{table}

\section{Conclusion}

\indent In this paper we looked at biasedness and unbiasedness of
two-sample Kolmogorov-Smirnov test. In case of different sample
sizes this test is not unbiased. However we found out that it is not
true for all $\alpha \in (0,1)$. There exists some special
combination of number of observations in each sample and
significance level $\alpha$ at which this test is unbiased (see
 e.g theorem {\ref{thnm2}). Moreover, we discovered the most biased
distribution for some values of $\alpha$. Although we consider just
small values of $\alpha$, for small sample sizes or for data such as
gene expressions these levels of $\alpha$ are appropriate. We did
not consider all levels of $\alpha$. However we point out that this
test can be unbiased for large samples and $\alpha$ around 0.05.
However more research is needed to find out the exact relation
between number of observations and level $\alpha$ at which this test
is unbiased.

\section*{Acknowledgments}
\indent The author thanks Prof. Lev Klebanov, DrSc. for valuable
comments, remarks and overall help. The work was supported by the
grant SVV 261315/2011.

\end{document}